\theoremstyle{definition}
\newtheorem{example}{Example}
\newtheorem{predicate}{Predicate}
\newtheorem{invariant}{Invariant}
\newtheorem{lemma}{Lemma}
\newenvironment{figalgorithm}[1][htbp]
  {\begin{figure}[#1]}
  {\end{figure}}
\newcommand{\Csh}{C\#}
\title{Verifying the DPLL Algorithm in Dafny}
\author{Cezar-Constantin Andrici
\institute{Alexandru Ioan Cuza University of Ia\c{s}i}
\email{cezar.andrici@gmail.com}
\and
\c{S}tefan Ciob\^ac\u{a}
\institute{Alexandru Ioan Cuza University of Ia\c{s}i}
\email{stefan.ciobaca@info.uaic.ro}
}
\begin{document}
\maketitle

\begin{abstract}
  
  Modern high-performance SAT solvers quickly solve large
  satisfiability instances that occur in practice. If the instance is
  satisfiable, then the SAT solver can provide a witness which can be
  checked independently in the form of a satisfying truth assignment.

  However, if the instance is unsatisfiable, the certificates could be
  exponentially large or the SAT solver might not be able to output
  certificates. The implementation of the SAT solver should then be
  trusted not to contain bugs. However, the data structures and
  algorithms implemented by a typical high-performance SAT solver are
  complex enough to allow for subtle programming errors.

  To counter this issue, we build a verified SAT solver using the
  Dafny system. We discuss its implementation in the present article.
  
\end{abstract}

\section{Introduction}

Recent advances have enabled the formal verification and analysis of
larger and larger software projects using techniques such static
analysis or automated or interactive theorem proving. Two such
examples are the certified C compiler CompCert~\cite{compcert} (using
the Coq proof assistant) and the seL4 microkernel~\cite{sel4} (using
Isabelle/HOL).

However, one problematic aspect is that verification tools themselves
(e.g., static analyzers, SAT/SMT solvers, theorem provers,
verification condition generators), which are usually highly
sophisticated and relatively large pieces of software, might also
include bugs. Such bugs can be rather embarassing, as they may lead
the verification tool to prove correct a wrong program.

In fact, many verification tools contain bugs. For example, Brummayer
and others~\cite{DBLP:conf/sat/BrummayerLB10} have shown using fuzz
testing that many state-of-the-art SAT solvers contained bugs,
including soundness bugs. Since 2016, in order to mitigate this issue,
the anual SAT competiton requires solvers competing in the main track
to output UNSAT certificates~\cite{DBLP:conf/aaai/BalyoHJ17}; these
certificates are independently checked in order to ensure
soundness. Other contents such as the Automated Theorem Proving
competition~\cite{DBLP:journals/aicom/Sutcliffe18} or SMT
competition~\cite{DBLP:journals/jsat/CokDW14} also contain various
soundness check at various points in the competition timeline.

An approach to help ensuring correctness is to \emph{verify the
  verification tools} themselves. In this article, we propose to do
just that for a SAT solver. A SAT solver solves instances of the
well-known Boolean satisfiability problem (SAT), which has many
applications in software and hardware verification, as well as in
combinatorial optimization. Relatively recently, high-performance SAT
solvers based on the DPLL and CDCL algorithms have emerged and can
handle practical SAT instances with millions of variables in
reasonable running time.

However, SAT solver implementations contain complex data structures
and algorithms and can therefore contain subtle bugs. This is less of
an issue for satisfiable instances, since a satisfiability certificate
can be checked easily. To counter this possible soundness issue, the
SAT Competition started to require solvers to output certificates even
in the case when the formula is unsatisfiable. Unsurprisingly, these
certificates can be exponential and some tools cannot output
certificates.

In this paper, we propose to instead implement a \emph{verified SAT
  solver} using the Dafny system. Dafny is a high-level imperative
language with support for object oriented features. Dafny features
methods with preconditions, postconditions and invariants which are
checked at compilation time by relying on the Z3 SMT solver. If a
postcondition cannot be established (either due to a timeout or due to
the fact that it does not hold), compilation fails. Therefore, we can
place a high degree of trust in a program verified using the Dafny
system.

A modern, high-performance SAT solver essentially consists of the
following optimizations over a backtracking approach to solving the
satisfiability question:
\begin{enumerate*}
\item unit propagation;
\item fast data structures to identify unit clauses;
\item variable ordering heuristics;
\item backjumping;
\item conflict analysis;
\item clause learning;
\item restart strategy.
\end{enumerate*}

The first three items are usually refered to as the DPLL algorithm,
while the last four items make up the CDCL algorithm. We have
implemented and verified the first two items, and the other points
remain for future work. In particular, we use a fixed variable
ordering. We note that our verification ensures soundness,
completeness and termination of the solver. We do not verify input
handling.

In Section~\ref{sec:related}, we discuss related work. In
Section~\ref{sec:dpll}, we briefly explain the DPLL algorithm, as
presented in the literature. In Section~\ref{sec:verified}, we present
the Dafny data structures and their invariants, as well as the
implementation of the core algorithm, together with the verified
guarantees that it provides. In Section~\ref{sec:benchmarks}, we
briefly benchmark the performance of our solver and we conclude in
Section~\ref{sec:conclusion}.

\section{Related Work}
\label{sec:related}

The SAT solver {\tt versat}~\cite{DBLP:conf/vmcai/OeSOC12} was
implemented and verified in the Guru programming language using
dependent types. As our solver, it also implements efficient data
structures. However, it relies on a translation to C where data
structures are implemented imperatively by using reference counting
and a statically enforced read/write discipline. Unlike our approach,
the solver is only verified to be sound: if it produces an {\tt UNSAT}
answer, then the input formula truly is unsatisfible. However,
termination and completeness (if the solver produces SAT, then the
formula truly is satisfiable) are not verified. Another small
difference is the verification guarantee: {\tt versat} is verified to
output {\tt UNSAT} only if a resolution proof of the empty clause
exists, while in our approach we use a semantic criterion: our solver
always terminates and produces UNSAT only if there is no satisfying
model of the input formula. Of course, in the case of propositional
logic these criteriors are equivalent and therefore this difference is
mostly a matter of implementation. Unlike our solver, some checks are
not proved statically and must be checked dynamically, so they could
be a source of incompleteness. An advantage of {\tt versat} over our
approach is that is implements more optimizations, like conflict
analysis and clauses learning, which enable it to be
competitive. Blanchette and
others~\cite{DBLP:journals/jar/BlanchetteFLW18} present a certified
SAT solving framework verified in the Isabelle/HOL proof
assistant. The proof effort is part of the \emph{Isabelle
  Formalization of Logic} project. The framework is based on
refinment: at the highest level sit several calculi like CDCL and
DPLL, which are formally proved. Depending on the strategy, the
calculi are also shown to be terminating. The calculi are shown to be
refined by a functional program. Finally, at the lowest level is an
imperative implementation in Standard ML, which is shown to be a
refinement of the functional implementation. Emphasis is also placed
on meta-theoretical consideration. The final solver can still two
orders of magnitude slower than a state-of-the-art C solver and
therefore additional optimizations~\cite{DBLP:conf/nfm/Fleury19} are
desirable. In contrast, in our own work we do not investigate any
metatheoretical properties of the DPLL/CDCL frameworks; we simply
concentrate on obtaining a verified SAT solver. We investigate to what
extent directly proving the imperative algorithm is possible in an
autoactive manner. We have shown that this is possible for a
restricted algorithm. However, we have reached a point where Dafny
proofs take a lot of time (tens of minutes). In order to go further
and verify the entire DPLL algorithm, additional techniques to bring
down Dafny verification time are required, as dicussed in the
conclusion. Another SAT solver formalized in Isabelle/HOL is by
Mari\'{c}~\cite{DBLP:journals/jar/Maric09}. In contrast to the
previous formalization, the verification methodology is not based on
refinement. Instead, a shallow embedding of the algorithm is expressed
as a set of recursive functions. This style of algorithm is of course
not as high-performance as an imperative one. Another formalization of
a SAT solver (extended with linear arithmetic) is by
Lescuyer~\cite{lescuyer:tel-00713668}, who verifies a DPLL-based
decision procedure for propositional logic in Coq and exposes it as a
reflexive tactic. Finally, a decision procedure based on DPLL is also
verified by Shankar and Vaucher~\cite{DBLP:journals/entcs/ShankarV11}
in the PVS system. For the proof, they rely on subtyping and dependent
types.


\section{The Davis-Putnam-Logemann-Loveland Algorithm}
\label{sec:dpll}

The DPLL procedure is an optimization of backtracking. The main
improvement is called unit propagation. A unit clause has the property
that its literals are all \textit{false} except one, which has no
value yet. If this literal would be set to \textit{false}, the clause
would not be satisfied; therefore, the literal must necessarily be
\textit{true}. This process of identifying unit clauses and settings
the unknown literal to true is called unit propagation.

\begin{example}\label{theexample} We consider a formula with 7 variables and 5 
clauses: 
\begin{center}
    $(x_{1} \lor x_{2} \lor x_{3}) \land$ 
    $(\lnot x_{1} \lor \lnot x_{2}) \land$
    $(x_{2} \lor \lnot x_{3}) \land $
    $(x_{2} \lor x_{4} \lor x_{5}) \land$
    $(x_{5} \lor x_{6} \lor x_{7})$
\end{center}
The formula is satisfiable, as witness by the truth assignment \textit{(true, false, false, true, true, false, true)}.
\end{example}

Algorithm~\ref{dpllrecursive} describes the DPLL procedure which we
implement and verify.  We describe how the algorithm works on this
example: first, the algorithm choses the literal $x_{1}$ and sets it
to \textit{true} (arbitrarily; if \textit{true} would not work out,
then the algorithm would backtrack here and try \textit{false}). At
the next step, it finds that the second clause is unit and sets $\lnot
x_{2}$ to \textit{true}, which makes the third clause unit, so $\lnot
x_{3}$ is set to \textit{true}. After unit propagation, the next
clause not satisfied yet is the fourth one, and the first unset
literal is $x_{4}$. At the branching step, $x_{4}$ is assigned to
\textit{true}.  Furthermore, only one clause is not satisfied yet, and
the next decision is to choose $x_{5}$ and set it to \textit{true},
which makes the formula satisfied, even if $x_{6}$ and $x_{7}$ are not
set yet.

\begin{figalgorithm}
\begin{algorithm}[H]
\SetKwProg{main}{Function}{}{}
\SetKwFunction{funone}{DPLL-recursive}

\main{
    \funone{$F$, $\textit{tau}$}}{
    \SetKwInOut{Input}{input}\SetKwInOut{Output}{output}
    \Input{A CNF formula $F$ and an partial assignment $\textit{tau}$}
    \Output{UNSAT, or an assignment satisfying $F$}
    \While{$\exists$ unit clause $\in F$}{
        $\ell \leftarrow$ the unset literal from the unit clause \\
        $\textit{tau} \leftarrow \textit{tau}[\ell := true]$
    }
    
    \lIf{F contains the empty clause}{\Return{UNSAT}}
    \If{F has no clauses left} {
        Output $\textit{tau}$ \\
        \Return{SAT}
    }
    
    $\ell \leftarrow$ first unset literal that appears in the first not satisfied clause
    
    \lIf{$\textit{DPLL-recursive}(F, \textit{tau}[\ell := \textit{true}]) = \textit{SAT}$}{\Return{$\textit{SAT}$}}
    
    \Return{$\textit{DPLL-recursive}(F, \textit{tau}[\ell := \textit{false}])$}
}
\end{algorithm}
\caption{Presented in Satisfiability solvers, 2017~\cite{DBLP:reference/fai/GomesKSS08}, slightly modified in order to match our implementation.}
\label{dpllrecursive}
\end{figalgorithm}
\setcounter{figure}{0}

\section{A Verified Implementation of the DPLL Algorithm}

In this section, we describe the structure of our verified solver.

\label{sec:verified}

\subsection{Data Structures and Their Invariants}

We have 2 classes, $\textit{Stack}$ and $\textit{Formula}$. For both
of them and their data structures we formulate the conditions that
hold before and after each step that modifies them.

\begin{figure}[ht]
\begin{datastructure}
class Stack {
  var size : int; 
  var stack : array< seq<(int, bool)> >;
  var variablesCount : int;
  ghost var contents : set<(int, bool)>;
}
\end{datastructure}
\caption{Data Structure - Stack.}
\end{figure}

The attribute $\textit{variablesCount}$ is the number of variables in the formula.
The array \textit{stack} has size \textit{variablesCount} and contains at most this many \emph{layers}, which are explained below.
A propositional variable is represented in the stack by a value between $0$
and $\textit{variablesCount}-1$.
The attribute \textit{size} represents the number of layers on the \textit{stack}.

The \textit{stack} contains the trail of assignments made up to the
current state, divided into layers. A new layer is created at every
branching step, where a new unset variable $v$ is chosen and set. The
first element is $(v, \textit{boolean value})$; the rest of the
sequence in the layer contains assignments performed by unit
propagation. In this way, every time the algorithm backtracks it knows
exactly how many assignments to revert to reach the previous state. An
instance of the stack is shown in Figure~\ref{fig:stackpresentation}
for Example~\ref{theexample}, where it can be seen that after the
first iteration, the variable chosen to be set was $x_{1}$, and the
variables set by the unit propagation were $x_{2}$ and $x_{3}$.

\begin{figure}[ht]
  \begin{center}
    \tikzstyle{clause} = [rectangle, text width=4cm, minimum height=0.75cm]

\tikzstyle{layer} = [rectangle, text width=6cm, minimum height=0.75cm, draw=black]

\begin{tikzpicture}[node distance=0.75cm]
\tikzstyle{every node}=[font=\small]
\node (clause1) [clause] {1) $\textcolor{green}{x_{1}} \lor \textcolor{red}{x_{2}} \lor \textcolor{red}{x_{3}}$};
\node (clause2) [clause, below of=clause1] {2) $\textcolor{red}{\lnot x_{1}} \lor \textcolor{green}{\lnot x_{2}}$};
\node (clause3) [clause, below of=clause2] {3) $\textcolor{red}{x_{2}} \lor \textcolor{green}{\lnot x_{3}}$};
\node (clause4) [clause, below of=clause3] {4) $\textcolor{red}{x_{2}} \lor \textcolor{green}{x_{4}} \lor x_{5}$};
\node (clause5) [clause, below of=clause4] {5) $\textcolor{green}{x_{5}} \lor x_{6} \lor x_{7}$};

\node (layer1) [layer, right of=clause1, xshift=5cm] {$(x_{1}, true)$, $(x_{2}, false)$, $(x_{3}, false)$};
\node (layer2) [layer, below of=layer1] {$(x_{4}, true)$};
\node (layer3) [layer, below of=layer2] {$(x_{5}, true)$};

\node (text1) [clause, above of=layer1, text width=6cm] {Stack:};
\node (text2) [clause, above of=clause1] {Formula:};

\end{tikzpicture}
  \end{center}
\caption{Stack representation for Example~\ref{theexample}.}
\label{fig:stackpresentation}
\end{figure}

Ghost constructs are used only during
verification~\cite{dafnyReferenceManual}. We use this feature for
\textit{contents}, which is a variable that makes easier to implement
and prove various conditions about the content of the
\textit{stack}. It has exactly the same content as \textit{stack}, but
it is stored as a set, where order does not matter.


To represent class invariants, Dafny encourages a methodology of defining
a \textbf{valid()} predicate, which is used as a pre-condition for all
class members. In our case \textbf{valid()} is a conjunction of the following
invariants.

\begin{invariant}
  \textit{Stack} contains assignments only on the used layers.
$\forall i \bullet 0 <= i < size-1 \Longrightarrow |stack[i]| > 0
\land \forall i \bullet size <= i < |stack| \Longrightarrow |stack[i]| == 0$.
\end{invariant}

\begin{invariant}
Each variable occurs at most once in the stack.
    \begin{dafny}
    forall i,j :: 0 <= i < |stack| &&  0 <= j < |stack[i]| ==>
        (forall i', j' :: i < i' < |stack| && 0 <= j' < |stack[i']| ==> 
            stack[i][j].0 != stack[i'][j'].0)
        (forall j' :: j < j' < |stack[i]| ==> stack[i][j].0 != stack[i][j'].0))
    \end{dafny}
\end{invariant}

\begin{invariant}
Every assignment which occurs in the stack also occurs in the ghost var $\textit{contents}$.

\begin{dafny}
(forall i, j :: 0 <= i < |stack| && 0 <= j < |stack[i]| ==> stack[i][j] in contents) &&
(forall c :: c in contents ==> 
    exists i,j :: 0 <= i < stack.Length && 0 <= j < |stack[i]| && stack[i][j] == c)
\end{dafny}
\end{invariant}

\begin{figure}[ht]
\begin{datastructure}
class Formula {
  var variablesCount : int;
  var clauses : seq< seq<int> >;
  var stack : Stack;
    
  var truthAssignment : array<int>;
    
  var trueLiteralsCount : array<int>;
  var falseLiteralsCount : array<int>;
   
  var positiveLiteralsToClauses : array< seq<int> >;
  var negativeLiteralsToClauses : array< seq<int> >;
}
\end{datastructure}
\caption{Data Structure - Formula.}\label{formulaDef}
\end{figure}

A \textbf{Formula} is a tuple $(\textit{variablesCount},
\textit{clauses}, \textit{stack})$.  Because a variable is represented
by a value between $0$ and $\textit{variablesCount}-1$, a positive
literal is denoted in \textit{clauses} by $\textit{variable}+1$ and a
negative literal by $-\textit{variable}-1$.  Based on the tuple, we
have created 5 more efficient data structures that contain the same
information, which have the following invariants:

The array \textbf{truthAssignment} is indexed from $0$ to
$\textit{variablesCount}-1$, where $\textit{truthAssignment}[v]$ means that for
the current state the variable $v$ has the value: $-1$ if unset, $0$
if false, $1$ if true. It is created based on the data structure
\textit{stack} and is updated every time \textit{stack} is
updated. Looking at Invariant~\ref{invariantTruthAssignment}, it is
easy to see how simple \textit{truthAssignment} is defined by using
the ghost variable $stack.contents$. If it would have been defined
based on the layers, several additional universal quantifiers would
have been needed. The function $\textit{getLiteralValue}(\textit{tau},
\ell)$ returns the value of the literal $\ell$ in the truth assignment
$\textit{tau}$. The invariant for \textbf{truthAssignment} is:

\begin{invariant}\label{invariantTruthAssignment}
$\textit{validTruthAssignment}()$
\begin{dafny}
|truthAssignment| == variablesCount &&
(forall i :: 0 <= i < |truthAssignment| ==> -1 <= truthAssignment[i] <= 1) &&
(forall i :: 0 <= i < |truthAssignment| && truthAssignment[i] != -1 ==>
  (i, truthAssignment[i]) in stack.contents) &&
(forall i :: 0 <= i < |truthAssignment| && truthAssignment[i] == -1 ==>
  (i, false) !in stack.contents && (i, true) !in stack.contents)
    \end{dafny}
\end{invariant}

The variables \textbf{trueLiteralsCount} and
\textbf{falseLiteralsCount} are two arrays indexed from $0$ to
$|\textit{clauses}|-1$, where $\textit{trueLiteralsCount}[i]$ denotes
the number of literals set to true in $\textit{clause}_{i}$ and
$\textit{falseLiteralsCount}[i]$ the number of false literals in
$\textit{clause}_{i}$. These are used to quickly identify which
clauses are satisfied, which clauses are unit or which clauses are
false. For example, to check whether a $clause_{i}$ is satisfied, we
only evaluate $\textit{trueLiteralsCount}[i] > 0$. The following
invariants are true for these arrays:

\begin{invariant}
$\textit{validTrueLiteralsCount}()$ (analogously for
  $\textit{validFalseLiteralsCount}()$)

    \begin{dafny}
|trueLiteralsCount| == |clauses| &&
forall i :: 0 <= i < |clauses| ==>
  0 <= trueLiteralsCount[i] == countTrueLiterals(truthAssignment, clauses[i])
    \end{dafny}
\end{invariant}

The arrays \textbf{positiveLiteralsToClauses} and
\textbf{negativeLiteralsToClauses} are two arrays indexed from $0$ to
$\textit{variablesCount}-1$, where
$\textit{positiveLiteralsToClauses}[i]$ contains the indices of the
clauses where a given variable occurs and
$\textit{negativeLiteralsToClauses}[i]$ the indices of the clauses
where its negation occurs. These data structures are used every time a
new literal is set/unset in order to update \textit{trueLiteralsCount}
and \textit{falseLiteralsCount} and to do unit propagation. They
satisfy the following invariants:

\begin{invariant}
$\textit{validPositiveLiteralsToClauses}()$ (analogously for
  $\textit{validNegativeLiteralsToClauses}()$)

    \begin{dafny}
|positiveLiteralsToClauses| == variablesCount && (
    forall variable :: 0 <= variable < |positiveLiteralsToClauses| ==>
        ghost var s := positiveLiteralsToClauses[variable];
        valuesBoundedBy(s, 0, |clauses|) && orderedAsc(s) &&
        (forall clauseIndex :: clauseIndex in s ==> variable+1 in clauses[clauseIndex]) &&
        // the clauses which do not appear, do not contain the positive literal
        (forall clauseIndex :: 0 <= clauseIndex < |clauses| && clauseIndex !in s ==>
            variable+1 !in clauses[clauseIndex]))
    \end{dafny}
\end{invariant}

The conjunction of the above invariants, plus a few other low-level
predicates that we omit for brevity, are incorporated in a single
predicate \textit{valid()} which is used as a data structure invariant
for all methods. This way, it is guaranteed that the data structures
are consistent.

\subsection{Proof of the Data Structure Invariants}\label{correctnessCode}

From the initial valid state, we can to do one of four actions: create
a new layer on the stack, set a variable, set a literal and do unit
propagation, and undo the last layer on the stack. We show that these
four methods preserve the data structure invariants above.

The method \textbf{newLayerOnStack()} increments the size of the stack
by one, but it has the following preconditions: the stack must not be
full and the last layer must not be empty. The method guarantees that
the new state is valid, and nothing changes except the size of the
stack.
\begin{dafny}
method newLayerOnStack()
    requires valid();
    requires 0 <= stack.size < |stack.stack|;
    requires stack.size > 0 ==> |stack.stack[stack.size-1]| > 0;

    modifies stack;

    ensures valid();
    ensures stack.size == old(stack.size) + 1;
    ensures 0 < stack.size <= |stack.stack|;
    ensures forall i :: 0 <= i < |stack.stack| ==> stack.stack[i] == old(stack.stack[i]);
    ensures stack.contents == old(stack.contents);
\end{dafny}

The method \textbf{setVariable(variable : int, value : bool)} requires
a variable that is not set in the current valid state and guarantees
that only one position in the new truth assignment was changed: the
position for \textit{variable}. Because \textit{stack.stack} and
\textit{truthAssignment} were changed, \textit{trueLiteralsCount} and
\textit{falseLiteralsCount} have to be updated. We use
\textit{positiveLiteralsToClauses} and
\textit{negativeLiteralsToClauses} to efficiently update them, and
prove that the ones that are not contained in those are not
impacted. To prove termination, we use as a variant the number of
unset variables, which decreases at every branching step of the
algorithm.
\begin{dafny}
method setVariable(variable : int, value : bool)
    requires valid();
    requires 0 <= variable < variablesCount;
    requires truthAssignment[variable] == -1;
    requires 0 < stack.size <= |stack.stack|;

    modifies truthAssignment, stack, stack.stack, trueLiteralsCount,
             falseLiteralsCount;

    ensures valid();
    ensures stack.size == old(stack.size);
    ensures 0 < stack.size <= |stack.stack|;
    ensures |stack.stack[stack.size-1]| == |old(stack.stack[stack.size-1]| + 1;
    ensures stack.contents == old(stack.contents) + {(variable, value)};
    ensures forall i :: 0 <= i < |stack.stack| && i != stack.size-1 ==>
                stack.stack[i] == old(stack.stack[i]);
    ensures value == false ==> old(truthAssignment[variable := 0]) == truthAssignment;
    ensures value == true ==> old(truthAssignment[variable := 1]) == truthAssignment;
    ensures countUnsetVariables(truthAssignment) + 1 ==
                countUnsetVariables(old(truthAssignment));
\end{dafny}      

The method \textbf{setLiteral(literal : int, value : bool)} uses
\textit{setVariable}, so the pre- and post-conditions are similar, but
the difference is that after it sets the first literal, it also
performs unit propagation. This means that it calls
\textit{setLiteral} again with new values. So, at the end of a call,
the \textit{truthAssignment} might change at several positions.
\begin{dafny}
method setLiteral(literal : int, value : bool)
    requires valid();
    requires validLiteral(literal);
    requires getLiteralValue(truthAssignment, literal) == -1;
    requires 0 < stack.size <= |stack.stack|;

    modifies truthAssignment, stack, stack.stack, trueLiteralsCount,
             falseLiteralsCount;

    ensures valid();
    ensures 0 < stack.size <= |stack.stack|;
    ensures stack.size == old(stack.size);
    ensures |stack.stack[stack.size-1]| > 0;
    ensures forall i :: 0 <= i < |stack.stack| && i != stack.size-1 ==>
                stack.stack[i] == old(stack.stack[i]);
    ensures forall x :: x in old(stack.contents) ==> x in stack.contents;
    ensures stack.contents == old(stack.contents) + stack.getLastLayer();
    ensures countUnsetVariables(truthAssignment) <
                old(countUnsetVariables(truthAssignment));
    ensures isSatisfiableExtend(oldTau[literal := value]) <==> 
              isSatisfiableExtend(truthAssignment);
    decreases countUnsetVariables(truthAssignment);
\end{dafny}

Finally, the method \textbf{undoLayerOnStack()} reverts the
assignments from the last layer by changing the value of the literals
to \textit{unset}. As \textit{setVariable}, this method needs several
proofs that confirm that the data structures are updated correctly and
that the state is valid. To quickly update \textit{trueLiteralsCount}
and \textit{falseLiteralsCount}, we used
\textit{positiveLiteralsToClauses} and
\textit{negativeLiteralsToClauses}, and proved that the ones that are
not in those remain unchanged.
\begin{dafny}
method undoLayerOnStack()
    requires valid();
    requires 0 < stack.size <= |stack.stack|;
    requires |stack.stack[stack.size-1]| > 0;

    modifies truthAssignment, stack, stack.stack, trueLiteralsCount,
             falseLiteralsCount;

    ensures valid();
    ensures stack.size == old(stack.size) - 1;
    ensures 0 <= stack.size < |stack.stack|;

    ensures forall i :: 0 <= i < |stack.stack| && i != stack.size ==>
                stack.stack[i] == old(stack.stack[i]);
    ensures |stack.stack[stack.size]| == 0;
    ensures forall x :: x in old(stack.contents) && x !in old(stack.stack[stack.size-1]) ==>
                x in stack.contents;
    ensures forall x :: x in old(stack.stack[stack.size-1]) ==> x !in stack.contents;
    ensures stack.contents == old(stack.contents) - old(stack.getLastLayer());
    ensures stack.size > 0 ==> |stack.stack[stack.size-1]| > 0;

\end{dafny}

\subsection{Proof of Functional Correctness}
\label{correctnessResult}

The entry point called to solve the SAT instance is
\textit{solve}: \begin{dafny}
method solve() returns (result : SAT_UNSAT)
  requires valid();
  requires 0 <= formula.stack.size <= formula.stack.stack.Length;
  requires formula.stack.size > 0 ==>
    |formula.stack.stack[formula.stack.size-1]| > 0;

  modifies formula.truthAssignment, formula.stack, formula.stack.stack,
             formula.trueLiteralsCount, formula.falseLiteralsCount;

  ensures valid();
  ensures old(formula.stack.size) == formula.stack.size;
  ensures forall i :: 0 <= i < |formula.stack.stack| ==>
    formula.stack.stack[i] == old(formula.stack.stack[i]);
  ensures old(formula.stack.contents) == formula.stack.contents;
  ensures formula.stack.size > 0 ==>
    |formula.stack.stack[formula.stack.size-1]| > 0;

  ensures result.SAT? ==> formula.isSatisfiableExtend(formula.truthAssignment);
  ensures result.UNSAT? ==>
    !formula.isSatisfiableExtend(formula.truthAssignment);

  decreases countUnsetVariables(formula.truthAssignment);
\end{dafny}

It implements the $\textit{DPLL-procedure}$ using recursion, but the data
structures are kept in the instance of a class instead of being passed
as arguments. The pre- and post-conditions of \textit{solve} can be
summed up by the following: it starts in a valid state and it ends up
in the exact same state, and if it returns SAT then the current
\textit{truthAssignment} can be extended to satisfy the formula, and
if returns UNSAT it means that no truth assignment extending the
current \textit{truthAssignment} satisfies it. We made use of the
following predicate:

\begin{predicate}
  \label{satExt}
  
  $\textit{isSatisfiableExtend}(\textit{tau}, \textit{clauses})$: A
  set of clauses are satisfiable by a partial truth assignment
  \textit{tau} if there exists a complete assignment that extends
  $\textit{tau}$ and that satisfies the formula.
  
\end{predicate}

Starting and ending in the same state means that we chose to undo the
changes we made even if we found a solution, and this is because
otherwise we would have had to add a condition to every
\textit{ensures} clause with the type of the result, which would have
doubled the number of post-conditions. For simplicity, we chose to
revert to the initial state every time.

A flowchart that shows every command in the \textit{solve} method, and
the propositions that hold after each line, is presented in
Figure~\ref{fig:flowchartsolve}. For simplicity, when the initial
state is reached, we used the notation $\textit{state} = \textit{old(state)}$.

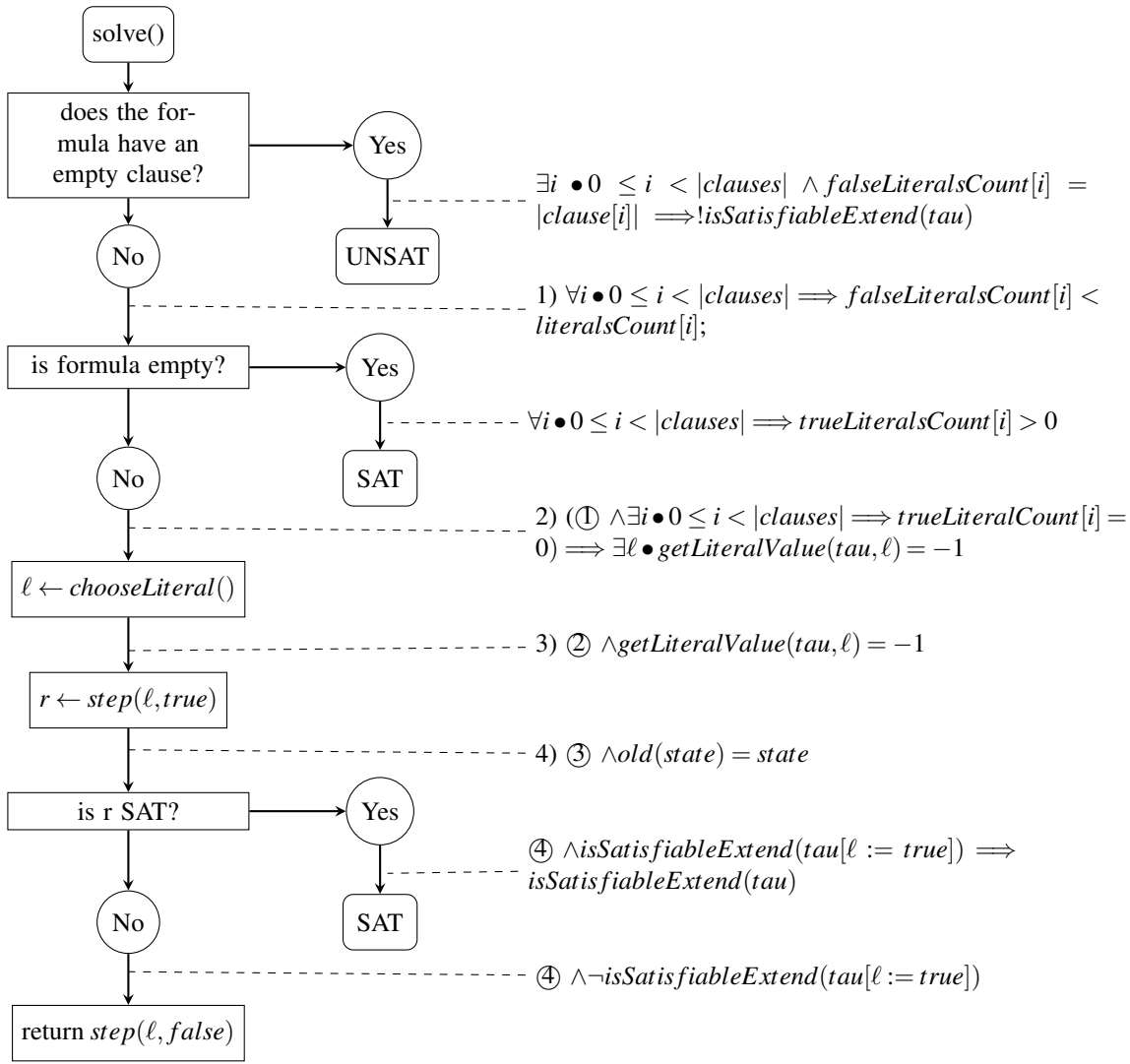
\begin{figure}[ht]
\tikzstyle{startstop} = [rectangle, rounded corners, minimum width=1cm, minimum height=0.75cm,text centered, draw=black]
\tikzstyle{command} = [rectangle, minimum height=0.75cm,text centered, draw=black ]
\tikzstyle{condition} = [rectangle, text width=3cm, text centered, draw=black]
\tikzstyle{conditionanswer} = [circle, text centered, draw=black]
\tikzstyle{proposition} = [rectangle, minimum height=0.75cm, text width=8cm]
\tikzstyle{arrow} = [thick,->,>=stealth]
\tikzstyle{arrow2} = [dashed,-,>=stealth]

\begin{tikzpicture}[node distance=1.5cm]
\tikzstyle{every node}=[font=\small]
\node (start) [startstop] {solve()};
\node (qhasEmptyClauses) [condition, below of=start] {does the formula have an empty clause?}; 
\draw [arrow] (start) -- (qhasEmptyClauses);
\node (qhasEmptyClausesNo) [conditionanswer, below of=qhasEmptyClauses] {No};
 \draw [arrow] (qhasEmptyClauses) -- (qhasEmptyClausesNo);
\node (qhasEmptyClausesYes) [conditionanswer, right of=qhasEmptyClauses, xshift=2cm] {Yes};
\draw [arrow] (qhasEmptyClauses) -- (qhasEmptyClausesYes);
\node (firstunsat) [startstop, below of=qhasEmptyClausesYes] {UNSAT};
\draw [arrow] (qhasEmptyClausesYes) -- (firstunsat);

\node (qisFormulaEmpty) [condition, below of=qhasEmptyClausesNo] {is formula empty?};
\draw [arrow] (qhasEmptyClausesNo) -- (qisFormulaEmpty);
\node (qisFormulaEmptyNo) [conditionanswer, below of=qisFormulaEmpty] {No};
\draw [arrow] (qisFormulaEmpty) -- (qisFormulaEmptyNo);
\node (qisFormulaEmptyYes) [conditionanswer, right of=qisFormulaEmpty, xshift=1.9cm] {Yes};
\draw [arrow] (qisFormulaEmpty) -- (qisFormulaEmptyYes);
\node (firstsat) [startstop, below of=qisFormulaEmptyYes] {SAT};
\draw [arrow] (qisFormulaEmptyYes) -- (firstsat);

\node (lchooseLiteral) [command, below of=qisFormulaEmptyNo] {$\ell \leftarrow chooseLiteral()$};
\draw [arrow] (qisFormulaEmptyNo) -- (lchooseLiteral);

\node (rstepltrue) [command, below of=lchooseLiteral] {$r \leftarrow step(\ell, true)$};
\draw [arrow] (lchooseLiteral) -- (rstepltrue);

\node (qisrsat) [condition, below of=rstepltrue] {is r SAT?};
\draw [arrow] (rstepltrue) -- (qisrsat);
\node (qisrsatNo) [conditionanswer, below of=qisrsat] {No};
\draw [arrow] (qisrsat) -- (qisrsatNo);
\node (qisrsatYes) [conditionanswer, right of=qisrsat, xshift=1.9cm] {Yes};
\draw [arrow] (qisrsat) -- (qisrsatYes);
\node (scndsat) [startstop, below of=qisrsatYes] {SAT};
\draw [arrow] (qisrsatYes) -- (scndsat);

\node (returnn) [command, below of=qisrsatNo] {return $step(\ell, false)$};
\draw [arrow] (qisrsatNo) -- (returnn);

\node (propunsat) [proposition, right of=firstunsat, yshift=0.75cm, xshift=4.5cm] {$\exists i\ \bullet 0\ \leq i\ < |clauses|\ \land falseLiteralsCount[ i] \ =|clause[ i] |\ \Longrightarrow !isSatisfiableExtend(tau)$};
\draw [arrow2] (propunsat) -- (3.5, -2.25);

\node (prop1) [proposition, right of=qhasEmptyClausesNo, yshift=-0.75cm, xshift=8cm] {1) $\forall i \bullet 0 \leq i < |clauses| \Longrightarrow falseLiteralsCount[i] < literalsCount[i];$};
\draw [arrow2] (prop1) -- (0, -3.67);

\node (propfstsat) [proposition, right of=firstsat, yshift=0.75cm, xshift=4.5cm] {$\forall i \bullet 0 \leq i < |clauses| \Longrightarrow trueLiteralsCount[i] > 0$};
\draw [arrow2] (propfstsat) -- (3.5, -5.25);

\node (prop2) [proposition, right of=qisFormulaEmptyNo, yshift=-0.75cm, xshift=8cm] {2) (\textcircled{1} $ \land \exists i \bullet 0 \leq i < |clauses| \Longrightarrow trueLiteralCount[i] = 0) \Longrightarrow \exists \ell \bullet getLiteralValue(tau, \ell) = -1$};
\draw [arrow2] (prop2) -- (0, -6.67);

\node (prop3) [proposition, right of=lchooseLiteral, yshift=-0.75cm, xshift=8cm] {3) \textcircled{2} $ \land  getLiteralValue(tau, \ell) = -1$};
\draw [arrow2] (prop3) -- (0, -8.33);

\node (prop4) [proposition, right of=rstepltrue, yshift=-0.75cm, xshift=8cm] {4) \textcircled{3} $ \land  old(state) = state$};
\draw [arrow2] (prop4) -- (0, -9.67);

\node (propsndsat) [proposition, right of=scndsat, yshift=0.75cm, xshift=4.5cm] {\textcircled{4} $ \land isSatisfiableExtend(tau[\ell := true]) \Longrightarrow isSatisfiableExtend(tau)$};
\draw [arrow2] (propsndsat) -- (3.5, -11.34);

\node (propfinal) [proposition, right of=qisrsatNo, yshift=-0.75cm, xshift=8cm] {\textcircled{4} $ \land \lnot isSatisfiableExtend(tau[\ell := true])$};
\draw [arrow2] (propfinal) -- (0, -12.67);
\end{tikzpicture}
\caption{Flowchart of method \textit{solve}.}\label{fig:flowchartsolve}
\end{figure}

The modifications are extracted to the method $step(\ell, value)$ to 
be easier to prove that we modify the data structures and that at the 
end we revert the changes to reach to the initial state. Most of the 
pre- and post-conditions are exactly the same as the ones in \textit{solve}, 
with small differences. First, \textit{step} takes an unset literal and 
returns SAT if $isSatisfiableExtend(tau[\ell := value])$ and UNSAT if not. 
With these \textit{ensures} clauses, \textit{solve} can find a solution or 
prove using Lemma~\ref{forVariableNotSatisfiableExtend_notSatisfiableExtend} 
that the current truth assignment could not be extended to satisfy the formula.

By putting together the methods 
described in Section~\ref{correctnessCode} it is easy to see how they fit and 
how the proof is build.

\begin{dafny}
method step(literal : int, value : bool) returns (result : SAT_UNSAT)
    requires valid();
    requires 0 <= formula.stack.size < |formula.stack.stack|;
    requires formula.stack.size > 0 ==>
      |formula.stack.stack[formula.stack.size-1]| > 0;
    requires !formula.hasEmptyClause();
    requires !formula.isEmpty();
    requires formula.validLiteral(literal);
    requires formula.getLiteralValue(formula.truthAssignment, literal) == -1;

    modifies formula.truthAssignment, formula.stack, formula.stack.stack,
             formula.trueLiteralsCount, formula.falseLiteralsCount;

    ensures valid();
    ensures old(formula.stack.size) == formula.stack.size;
    ensures forall i :: 0 <= i < |formula.stack.stack| ==>
      formula.stack.stack[i] == old(formula.stack.stack[i]);
    ensures old(formula.stack.contents) == formula.stack.contents;
    ensures formula.stack.size > 0 ==>
            |formula.stack.stack[formula.stack.size-1]| > 0;
    ensures result.SAT? ==>
      formula.isSatisfiableExtend(formula.truthAssignment[literal := value]);
    ensures result.UNSAT? ==>
      !formula.isSatisfiableExtend(formula.truthAssignment[literal := value]);
{
  formula.newLayerOnStack();
  // stack.size == old(stack.size) + 1
  ghost var tau' := formula.truthAssignment[literal := value];
  formula.setLiteral(literal, value);
  // isSatisfiableExtend(tau') <==> isSatisfiableExtend(formula.truthAssignment)
  result := solve();
  // isSatisfiableExtend(formula.truthAssignment) || 
  //  !isSatisfiableExtend(formula.truthAssignment)
  formula.undoLayerOnStack();
  // old(state) == state && (isSatisfiableExtend(tau') || !isSatisfiableExtend(tau'))
  return result;
}
\end{dafny}

The method $\textit{setLiteral}(\ell, \textit{value})$ also ensures
that the formula can be satisfiable under the returned truth
assignment (let us denote it by \textit{finalTau}) if and only if it
can be satisfiable under the initial truth assignment with $\ell
\leftarrow value$ (\textit{tau}). The difference between \textit{tau}
and \textit{finalTau} is that \textit{finalTau} also contain the
assignments performed during unit propagation.

To do the unit propagation, we search in
$\textit{negativeLiteralsToClauses}[\ell]$ for unit clauses, and when
we find one ($\ell'$), we call $\textit{setLiteral}$ again to set the
unset literal to \textit{true}. 

We use the following two lemmas (formally verified by Dafny) to show
that this is sound.

\begin{lemma}\label{unitClauseLiteralFalse_tauNotSatisfiable}
For a truth assignment \textit{tau} and a unit clause $c$ where the
literal $\ell$ is not set, $\textit{tau}[\ell := \textit{false}]$ does
not satisfy the formula.
\end{lemma}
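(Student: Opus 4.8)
The plan is to show that assigning \textit{false} to $\ell$ falsifies the unit clause $c$ outright, and that no extension of the resulting assignment can repair this, so that $\lnot\textit{isSatisfiableExtend}(\textit{tau}[\ell := \textit{false}])$ holds in the sense of Predicate~\ref{satExt}. The whole argument rests on a single elementary fact applied at the clause level: a clause all of whose literals are already \textit{false} stays unsatisfied under every extension, and one unsatisfied clause blocks satisfaction of the entire conjunction.

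First I would unfold what it means for $c$ to be a unit clause with $\ell$ unset: under \textit{tau}, the literal $\ell$ satisfies $\textit{getLiteralValue}(\textit{tau}, \ell) = -1$, while every \emph{other} literal of $c$ evaluates to \textit{false}. In terms of the data structures this is precisely the statement that $\textit{trueLiteralsCount}$ for $c$ equals $0$ and $\textit{falseLiteralsCount}$ equals $|c| - 1$, with $\ell$ being the single remaining unset literal. I would then set $\ell := \textit{false}$ and observe that $\ell$, the only literal of $c$ not yet \textit{false}, now evaluates to \textit{false} as well. Hence every literal of $c$ is \textit{false} under $\textit{tau}[\ell := \textit{false}]$; equivalently $\textit{countTrueLiterals}$ of $c$ is still $0$ while the false count has reached $|c|$.

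The core step is to lift this to all extensions. For an arbitrary complete assignment $\tau'$ extending $\textit{tau}[\ell := \textit{false}]$, I would invoke monotonicity of literal evaluation: a literal that is already \textit{false} in a partial assignment remains \textit{false} in every extension, because its underlying variable is already fixed and an extension only assigns previously unset variables. Thus every literal of $c$ is still \textit{false} in $\tau'$, so $c$ is unsatisfied by $\tau'$; since $c$ is one of the clauses of the formula and satisfaction requires every clause to contain a true literal, $\tau'$ does not satisfy the formula. As $\tau'$ was arbitrary, no extension satisfies the formula, which is exactly $\lnot\textit{isSatisfiableExtend}(\textit{tau}[\ell := \textit{false}])$.

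I expect the main obstacle to be this monotonicity step rather than any of the counting bookkeeping. Dafny must be convinced both that extending a partial assignment can never turn an already-\textit{false} literal into a satisfied one, and that a single falsified clause suffices to defeat the whole conjunction; making the quantifier over extensions explicit — exhibiting the clause index $c$, the per-literal falseness, and appealing to the definition of clause and formula satisfaction underlying Predicate~\ref{satExt} — is where the proof detail concentrates, even though each individual fact is routine.
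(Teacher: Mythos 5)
Your proposal is correct and follows essentially the same argument as the paper's proof: all literals of $c$ evaluate to \textit{false} under $\textit{tau}[\ell := \textit{false}]$, this persists under any complete extension, and one falsified clause defeats the whole formula. The only difference is presentational (you quantify directly over all extensions, while the paper assumes a satisfying extension and derives a contradiction), which does not change the substance of the proof.
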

\begin{proof}
  We assume that a complete $tau'$ that extends $\textit{tau}[\ell :=
    \textit{false}]$ and satisfies the formula exists. But all
  literals in $c$ evaluate to \textit{false} under \textit{tau} and
  therefore under $\textit{tau}'$ as well. The truth assignment
  $\textit{tau}'$ does not satisfy clause $c$, and therefore does not
  satisfy the formula either, resulting in a contradiction.
\end{proof}

\begin{lemma}\label{forVariableNotSatisfiableExtend_notSatisfiableExtend}
Given a truth assignment \textit{tau}, if for a literal $\ell$,
$\textit{tau}[\ell := \textit{false}]$ and $\textit{tau}[\ell :=
  \textit{true}]$ do not satisfy the formula when extended, then
\textit{tau} does not satisfy the formula either.
\end{lemma}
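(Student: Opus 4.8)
The plan is to argue by contradiction, reducing satisfiability of \textit{tau} to satisfiability of one of the two branches. Assume, for contradiction, that $\textit{isSatisfiableExtend}(\textit{tau})$ holds. Unfolding Predicate~\ref{satExt}, this yields a complete truth assignment $\textit{tau}'$ that extends \textit{tau} and satisfies every clause of the formula. Since $\textit{tau}'$ is complete, it assigns a definite value to the variable underlying $\ell$, so the literal $\ell$ evaluates to either \textit{true} or \textit{false} under $\textit{tau}'$; equivalently, $\textit{getLiteralValue}(\textit{tau}', \ell)$ is determined.

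First I would make this case distinction explicit. Suppose $\ell$ evaluates to \textit{true} under $\textit{tau}'$. I claim that $\textit{tau}'$ then also extends $\textit{tau}[\ell := \textit{true}]$: these two partial assignments can differ only on the variable of $\ell$, and on that variable $\textit{tau}'$ agrees with $\textit{tau}[\ell := \textit{true}]$ exactly because $\ell$ is \textit{true} under $\textit{tau}'$, while on every other variable $\textit{tau}'$ already agrees with \textit{tau} and hence with $\textit{tau}[\ell := \textit{true}]$. As $\textit{tau}'$ is complete and satisfies the formula, it witnesses $\textit{isSatisfiableExtend}(\textit{tau}[\ell := \textit{true}])$, contradicting the hypothesis. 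The case where $\ell$ evaluates to \textit{false} is symmetric and contradicts the other hypothesis. Since both branches are impossible, the assumption fails and \textit{tau} cannot be extended to satisfy the formula.

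The mathematical content is just this two-way case split, so the effort in Dafny lies almost entirely in the bookkeeping. I expect the delicate step to be convincing Dafny that the single witness $\textit{tau}'$ simultaneously serves as an extension of \textit{tau} and of the appropriate $\textit{tau}[\ell := \textit{value}]$: this requires connecting the value returned by $\textit{getLiteralValue}(\textit{tau}', \ell)$ to the raw value stored for the underlying variable, and then discharging the extension relation on the single overwritten position while leaving all other positions untouched. Because Dafny will not spontaneously reuse $\textit{tau}'$ across the existential in Predicate~\ref{satExt}, I would supply $\textit{tau}'$ explicitly as the witness in each branch of the case analysis.
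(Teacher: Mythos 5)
Your proposal is correct and follows essentially the same argument as the paper: assume a complete satisfying extension $\textit{tau}'$ of \textit{tau} exists, case-split on the value $\textit{tau}'$ gives to $\ell$, and observe that $\textit{tau}'$ then witnesses satisfiability of $\textit{tau}[\ell := \textit{true}]$ or $\textit{tau}[\ell := \textit{false}]$, contradicting the hypothesis. Your additional remarks about supplying $\textit{tau}'$ explicitly as the existential witness in Dafny are a reasonable elaboration of the same proof, not a different route.
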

\begin{proof}
  Let us assume that \textit{tau} if extended could satisfy the
  formula, therefore there exists a complete extension $\textit{tau}'$
  that satisfies the formula. But $tau'[\ell]$ must be \textit{true}
  or \textit{false}, which makes it an extension of $\textit{tau}[\ell
    := \textit{true}/\textit{false}]$ which can not be extend to
  satisfy the formula, resulting in a contradiction.
\end{proof}

Setting a variable and performing unit propagation is separated as
method $\textit{step}(\ell, \textit{value})$, in order to be make the
development more modular and therefore easier to prove. Most of the
pre- and post-conditions are exactly the same as the ones in
\textit{solve}. 

\section{Benchmarks}
\label{sec:benchmarks}

Dafny code can be extracted to \Csh{} and compiled. We used a few
tests from the latest SAT competitions and we ran them side by side
against the SAT Solver MiniSat\footnote{\url{http://minisat.se/}}.

For experimenting, we restricted our tests only to the tests presented
in Table~\ref{tab:timme}, which were collected by Gregory J. Duck and
published on his
website\footnote{\url{https://www.comp.nus.edu.sg/\~gregory/sat/}}. The tests
were ran on an Intel Core i5-8250U 3.40GHz with 8GB of RAM, Operating
System 5.1.14-arch1-1-ARCH, Dafny 2.3.0.10506, GCC 9.1.0.

\begin{table}[th]
    \centering
    \begin{tabular}{ | p{4cm} | p{4cm} | p{4cm} | } 
    \hline
    Test (number of variables, number of clauses) & Our verified Dafny \newline SAT Solver & MiniSat v2.2.0 \\
    \hline
    Hole6 (42, 133) & UNSAT / 3.21s & UNSAT / 0.02s \\
    Zebra (155, 1135) & SAT / 1.09s & SAT / 0.00s \\
    Hanoi4 (718, 4934) & timed out & SAT / 0.03s \\
    Queens16 (256, 6336) & SAT / 4.64s & SAT / 0.00s \\
    \hline
    \end{tabular}
    \caption{\label{tab:timme}Response and the time required to solve the tests}
\end{table}

As expected, a CDCL solver outperforms our solver. However, the
soundness guarantee offered by our verifier solver is higher than the
unverified C code of MiniSat.

\section{Conclusion and Further Work}
\label{sec:conclusion}

We have developed a formally verified implementation of the DPLL
algorithm in the Dafny programming language. Our implementation
incorporates data structures to quickly identify unit clauses and
perform unit propagation. However, it uses a fixed variable ordering.

The formalization consists of about 3200 lines of Dafny code. The
project was developed in a year of part-time work, as part of the BsC
thesis of the first author. The code was written by the first author,
who also lerned Dafny during that time. The ratio between lines of
proofs and code is 5 to 1. The function \textit{undoLayerOnStack} has
the biggest proof-to-code ratio: 27 lines of actual code and 280 lines
of annotations. The entire Dafny implementation of the solver is
available at \url{https://github.com/andricicezar/sat-solver-dafny}.

The solver is not currently competitive against state-of-the-art CDCL
solvers, but since Dafny compiles to \Csh, we conjecture that it is
possible in theory to obtain performance close to the state-of-the-art
by implementating the rest of the optimizations present in CDCL. We
base our conjecture on a quick test that shows that Dafny code for
enumerating permutations is roughly as performant as hand-written \Csh
code for the same task.


Development of the verified solver was challenging, since the
verification time is prohibitive for certain methods. Here is a
summary of the verification time required for the various methods
implemented as part of the solver:

\begin{table}[th]
    \centering
  \begin{tabular}{ | p{6cm} | p{6cm} | } 
    \hline
    Function / Method / Lemma & Time (seconds)\\
    \hline
    Formula.setLiteral($\ell, \textit{value}$) & 630.59 \\
    SATSolver.solve() & 332.01 \\
    Formula.setVariable($v, \textit{value}$) & 294.54 \\
    Formula.undoLayerOnStack() & 249.16 \\
    SATSolver.step() & 233.25 \\
    Formula.constructor($\textit{vC}, \textit{clauses}$) & 91.14 \\
    Other 32 & Less than 3 seconds each \\
    \hline
  \end{tabular}
  \caption{Time required to prove each method / lemma in seconds}
\end{table}

As further work, the main priority is to lower the verification
time. This will enable experimenting with additional SAT solving
optimizations. Another possible extension is to port the data
structures and algorithms, together with their invariants, to a
verifier for C, such as VCC or Frama-C. This should give C-like
performance to the verified SAT solver.

\paragraph{Acknowledgement.} This work was supported by a grant of the \emph{Alexandru Ioan Cuza University}
of Ia\c{s}i within the Research Grants program \emph{Grant UAIC}, code
GI-UAIC-2018-07.

\nocite{*}
\bibliographystyle{eptcs}

\begin{thebibliography}{10}
\providecommand{\bibitemdeclare}[2]{}
\providecommand{\surnamestart}{}
\providecommand{\surnameend}{}
\providecommand{\urlprefix}{Available at }
\providecommand{\url}[1]{\texttt{#1}}
\providecommand{\href}[2]{\texttt{#2}}
\providecommand{\urlalt}[2]{\href{#1}{#2}}
\providecommand{\doi}[1]{doi:\urlalt{http://dx.doi.org/#1}{#1}}
\providecommand{\bibinfo}[2]{#2}

\bibitemdeclare{inproceedings}{DBLP:conf/aaai/BalyoHJ17}
\bibitem{DBLP:conf/aaai/BalyoHJ17}
\bibinfo{author}{Tom{\'{a}}s \surnamestart Balyo\surnameend},
  \bibinfo{author}{Marijn J.~H. \surnamestart Heule\surnameend} \&
  \bibinfo{author}{Matti \surnamestart J{\"{a}}rvisalo\surnameend}
  (\bibinfo{year}{2017}): \emph{\bibinfo{title}{{SAT} Competition 2016: Recent
  Developments}}.
\newblock In: {\sl \bibinfo{booktitle}{Proceedings of the Thirty-First {AAAI}
  Conference on Artificial Intelligence, February 4-9, 2017, San Francisco,
  California, {USA.}}}, pp. \bibinfo{pages}{5061--5063}.

\bibitemdeclare{article}{DBLP:journals/jar/BlanchetteFLW18}
\bibitem{DBLP:journals/jar/BlanchetteFLW18}
\bibinfo{author}{Jasmin~Christian \surnamestart Blanchette\surnameend},
  \bibinfo{author}{Mathias \surnamestart Fleury\surnameend},
  \bibinfo{author}{Peter \surnamestart Lammich\surnameend} \&
  \bibinfo{author}{Christoph \surnamestart Weidenbach\surnameend}
  (\bibinfo{year}{2018}): \emph{\bibinfo{title}{A Verified {SAT} Solver
  Framework with Learn, Forget, Restart, and Incrementality}}.
\newblock {\sl \bibinfo{journal}{J. Autom. Reasoning}}
  \bibinfo{volume}{61}(\bibinfo{number}{1-4}), pp. \bibinfo{pages}{333--365},
  \doi{10.1007/s10817-018-9455-7}.

\bibitemdeclare{inproceedings}{DBLP:conf/sat/BrummayerLB10}
\bibitem{DBLP:conf/sat/BrummayerLB10}
\bibinfo{author}{Robert \surnamestart Brummayer\surnameend},
  \bibinfo{author}{Florian \surnamestart Lonsing\surnameend} \&
  \bibinfo{author}{Armin \surnamestart Biere\surnameend}
  (\bibinfo{year}{2010}): \emph{\bibinfo{title}{Automated Testing and Debugging
  of {SAT} and {QBF} Solvers}}.
\newblock In: {\sl \bibinfo{booktitle}{Theory and Applications of
  Satisfiability Testing - {SAT} 2010, 13th International Conference, {SAT}
  2010, Edinburgh, UK, July 11-14, 2010. Proceedings}}, pp.
  \bibinfo{pages}{44--57}, \doi{10.1007/978-3-642-14186-7\_6}.

\bibitemdeclare{article}{DBLP:journals/jsat/CokDW14}
\bibitem{DBLP:journals/jsat/CokDW14}
\bibinfo{author}{David~R. \surnamestart Cok\surnameend}, \bibinfo{author}{David
  \surnamestart D{\'{e}}harbe\surnameend} \& \bibinfo{author}{Tjark
  \surnamestart Weber\surnameend} (\bibinfo{year}{2014}):
  \emph{\bibinfo{title}{The 2014 {SMT} Competition}}.
\newblock {\sl \bibinfo{journal}{{JSAT}}} \bibinfo{volume}{9}, pp.
  \bibinfo{pages}{207--242}.

\bibitemdeclare{inproceedings}{DBLP:conf/nfm/Fleury19}
\bibitem{DBLP:conf/nfm/Fleury19}
\bibinfo{author}{Mathias \surnamestart Fleury\surnameend}
  (\bibinfo{year}{2019}): \emph{\bibinfo{title}{Optimizing a Verified {SAT}
  Solver}}.
\newblock In: {\sl \bibinfo{booktitle}{{NASA} Formal Methods - 11th
  International Symposium, {NFM} 2019, Houston, TX, USA, May 7-9, 2019,
  Proceedings}}, pp. \bibinfo{pages}{148--165},
  \doi{10.1007/978-3-030-20652-9\_10}.

\bibitemdeclare{article}{dafnyReferenceManual}
\bibitem{dafnyReferenceManual}
\bibinfo{author}{Richard~L. \surnamestart Ford\surnameend} \&
  \bibinfo{author}{K.~Rustan~M. \surnamestart Leino\surnameend}
  (\bibinfo{year}{2017}): \emph{\bibinfo{title}{Dafny Reference Manual}}.

\bibitemdeclare{incollection}{DBLP:reference/fai/GomesKSS08}
\bibitem{DBLP:reference/fai/GomesKSS08}
\bibinfo{author}{Carla~P. \surnamestart Gomes\surnameend},
  \bibinfo{author}{Henry~A. \surnamestart Kautz\surnameend},
  \bibinfo{author}{Ashish \surnamestart Sabharwal\surnameend} \&
  \bibinfo{author}{Bart \surnamestart Selman\surnameend}
  (\bibinfo{year}{2008}): \emph{\bibinfo{title}{Satisfiability Solvers}}.
\newblock In: {\sl \bibinfo{booktitle}{Handbook of Knowledge Representation}},
  \bibinfo{publisher}{Elsevier}, pp. \bibinfo{pages}{89--134},
  \doi{10.1016/S1574-6526(07)03002-7}.

\bibitemdeclare{inproceedings}{sel4}
\bibitem{sel4}
\bibinfo{author}{Gerwin \surnamestart Klein\surnameend}, \bibinfo{author}{Kevin
  \surnamestart Elphinstone\surnameend}, \bibinfo{author}{Gernot \surnamestart
  Heiser\surnameend}, \bibinfo{author}{June \surnamestart
  Andronick\surnameend}, \bibinfo{author}{David \surnamestart Cock\surnameend},
  \bibinfo{author}{Philip \surnamestart Derrin\surnameend},
  \bibinfo{author}{Dhammika \surnamestart Elkaduwe\surnameend},
  \bibinfo{author}{Kai \surnamestart Engelhardt\surnameend},
  \bibinfo{author}{Rafal \surnamestart Kolanski\surnameend},
  \bibinfo{author}{Michael \surnamestart Norrish\surnameend},
  \bibinfo{author}{Thomas \surnamestart Sewell\surnameend},
  \bibinfo{author}{Harvey \surnamestart Tuch\surnameend} \&
  \bibinfo{author}{Simon \surnamestart Winwood\surnameend}
  (\bibinfo{year}{2009}): \emph{\bibinfo{title}{seL4: formal verification of an
  {OS} kernel}}.
\newblock In: {\sl \bibinfo{booktitle}{Proceedings of the 22nd {ACM} Symposium
  on Operating Systems Principles 2009, {SOSP} 2009, Big Sky, Montana, USA,
  October 11-14, 2009}}, pp. \bibinfo{pages}{207--220},
  \doi{10.1145/1629575.1629596}.

\bibitemdeclare{article}{compcert}
\bibitem{compcert}
\bibinfo{author}{Xavier \surnamestart Leroy\surnameend} (\bibinfo{year}{2009}):
  \emph{\bibinfo{title}{Formal verification of a realistic compiler}}.
\newblock {\sl \bibinfo{journal}{Commun. {ACM}}}
  \bibinfo{volume}{52}(\bibinfo{number}{7}), pp. \bibinfo{pages}{107--115},
  \doi{10.1145/1538788.1538814}.

\bibitemdeclare{phdthesis}{lescuyer:tel-00713668}
\bibitem{lescuyer:tel-00713668}
\bibinfo{author}{Stephane \surnamestart Lescuyer\surnameend}
  (\bibinfo{year}{2011}): \emph{\bibinfo{title}{Formalizing and Implementing a
  Reflexive Tactic for Automated Deduction in Coq}}.
\newblock \bibinfo{type}{Theses}, \bibinfo{school}{{Universit{\'e} Paris Sud -
  Paris XI}}.

\bibitemdeclare{article}{DBLP:journals/jar/Maric09}
\bibitem{DBLP:journals/jar/Maric09}
\bibinfo{author}{Filip \surnamestart Mari{\'{c}}\surnameend}
  (\bibinfo{year}{2009}): \emph{\bibinfo{title}{Formalization and
  Implementation of Modern {SAT} Solvers}}.
\newblock {\sl \bibinfo{journal}{J. Autom. Reasoning}}
  \bibinfo{volume}{43}(\bibinfo{number}{1}), pp. \bibinfo{pages}{81--119},
  \doi{10.1007/s10817-009-9127-8}.

\bibitemdeclare{inproceedings}{DBLP:conf/vmcai/OeSOC12}
\bibitem{DBLP:conf/vmcai/OeSOC12}
\bibinfo{author}{Duckki \surnamestart Oe\surnameend}, \bibinfo{author}{Aaron
  \surnamestart Stump\surnameend}, \bibinfo{author}{Corey \surnamestart
  Oliver\surnameend} \& \bibinfo{author}{Kevin \surnamestart Clancy\surnameend}
  (\bibinfo{year}{2012}): \emph{\bibinfo{title}{versat: {A} Verified Modern
  {SAT} Solver}}.
\newblock In: {\sl \bibinfo{booktitle}{Verification, Model Checking, and
  Abstract Interpretation - 13th International Conference, {VMCAI} 2012,
  Philadelphia, PA, USA, January 22-24, 2012. Proceedings}}, pp.
  \bibinfo{pages}{363--378}, \doi{10.1007/978-3-642-27940-9\_24}.

\bibitemdeclare{article}{DBLP:journals/entcs/ShankarV11}
\bibitem{DBLP:journals/entcs/ShankarV11}
\bibinfo{author}{Natarajan \surnamestart Shankar\surnameend} \&
  \bibinfo{author}{Marc \surnamestart Vaucher\surnameend}
  (\bibinfo{year}{2011}): \emph{\bibinfo{title}{The Mechanical Verification of
  a DPLL-Based Satisfiability Solver}}.
\newblock {\sl \bibinfo{journal}{Electr. Notes Theor. Comput. Sci.}}
  \bibinfo{volume}{269}, pp. \bibinfo{pages}{3--17},
  \doi{10.1016/j.entcs.2011.03.002}.

\bibitemdeclare{article}{DBLP:journals/aicom/Sutcliffe18}
\bibitem{DBLP:journals/aicom/Sutcliffe18}
\bibinfo{author}{Geoff \surnamestart Sutcliffe\surnameend}
  (\bibinfo{year}{2018}): \emph{\bibinfo{title}{The 9th {IJCAR} Automated
  Theorem Proving System Competition - {CASC-J9}}}.
\newblock {\sl \bibinfo{journal}{{AI} Commun.}}
  \bibinfo{volume}{31}(\bibinfo{number}{6}), pp. \bibinfo{pages}{495--507},
  \doi{10.3233/AIC-180773}.

\end{thebibliography}

\end{document}